\DeclareMathOperator*{\argmin}{arg\,min}
\newtheorem{theorem}{Theorem}
\newtheorem{lemma}{Lemma}
\newtheorem{observation}{Observation}
\newcommand{\R}{\mathbb{R}}
\newcommand{\dist}{\mathrm{dist}}
\def\calD{\mathcal{D}}
\def\vd{\mathcal{V}\!\mathcal{D}}
\title{An Improved Algorithm for Shortest Paths in Weighted Unit-Disk Graphs\thanks{A preliminary version of this paper will appear in {\em Proceedings of the 36th Canadian Conference on Computational Geometry (CCCG 2024)}. This research was supported in part by NSF under Grant CCF-2300356.}}
\author{
Bruce W. Brewer\thanks{Kahlert School of Computing,
University of Utah, Salt Lake City, UT 84112, USA. {\tt bruce.brewer@utah.edu}}
\and
Haitao Wang\thanks{Kahlert School of Computing,
University of Utah, Salt Lake City, UT 84112, USA. {\tt haitao.wang@utah.edu}}
}
\date{}
\begin{document}

\maketitle

\vspace{-0.2in}
\begin{abstract}
    Let $V$ be a set of $n$ points in the plane. The unit-disk graph $G = (V, E)$ has vertex set $V$ and an edge $e_{uv} \in E$ between vertices $u, v \in V$ if the Euclidean distance between $u$ and $v$ is at most 1. The weight of each edge $e_{uv}$ is the Euclidean distance between $u$ and $v$. Given $V$ and a source point $s\in V$, we consider the problem of computing shortest paths in $G$ from $s$ to all other vertices. The previously best algorithm for this problem runs in $O(n \log^2 n)$ time [Wang and Xue, SoCG'19]. The problem has an $\Omega(n\log n)$ lower bound under the algebraic decision tree model. In this paper, we present an improved algorithm of $O(n \log^2 n / \log \log n)$ time (under the standard real RAM model). Furthermore, we show that the problem can be solved using $O(n\log n)$ comparisons under the algebraic decision tree model, matching the $\Omega(n\log n)$ lower bound.
\end{abstract}

{\em Keywords:} Unit-disk graphs, shortest paths, dynamic nearest neighbor search, Voronoi diagrams

\section{Introduction}
\label{sec:intro}
Let $V$ be a set of $n$ points in the plane. The unit-disk graph $G = (V, E)$ has vertex set $V$ and an edge $e_{uv} \in E$ between vertices $u, v \in V$ if the Euclidean distance between $u$ and $v$ is at most 1. Alternatively, $G$ can be seen as the intersection graph of disks with radius $\frac{1}{2}$ centered at the points in $V$ (i.e., two disks have an edge in the graph if they intersect). In the {\em weighted graph}, the weight of each edge $e_{uv} \in E$ is the Euclidean distance between $u$ and $v$. In the {\em unweighted graph}, all edges have the same weight.

Given $V$ and a source point $s\in V$, we study the single source shortest path (SSSP) problem where the goal is to compute shortest paths from $s$ to all other vertices in $G$. Like in general graphs, the algorithm usually returns a shortest path tree rooted at $s$. The problem in the unweighted graph has an $\Omega(n\log n)$ lower bound in the algebraic decision tree model since even deciding if $G$ is connected requires that much time by a reduction from the max-gap~\cite{ref:CabelloSh15}. The unweighted problem has been solved optimally in $O(n \log n)$ time by Cabello and Jej\v ci\v c~\cite{ref:CabelloSh15}, or in $O(n)$ time by Chan and Skrepetos \cite{ref:ChanAl16} if the points of $V$ are pre-sorted (by both the $x$- and $y$-coordinates). Several algorithms for the weighted case are also known~\cite{ref:CabelloSh15,ref:KaplanDy20,ref:LiuNe22,ref:RodittyOn11,ref:WangNe20}. Roditty and Segal \cite{ref:RodittyOn11} first solved the problem in $(n^{4/3 + \delta})$ time, where $\delta > 0$ is an arbitrarily small constant. Cabello and Jej\v ci\v c \cite{ref:CabelloSh15} improved it to $O(n^{1 + \delta})$ time. Subsequent improvements were made by Kaplan, Mulzer, Roditty, Seiferth, and Sharir~\cite{ref:KaplanDy20} and also by Liu~\cite{ref:LiuNe22} by developing more efficient dynamic bichromatic closest pair data structures and plugging them into the algorithm of \cite{ref:CabelloSh15}. Wang and Xue~\cite{ref:WangNe20} proposed a new method that solves the problem in $O(n \log^2 n)$ time without using dynamic bichromatic closest pair data structures. It is currently the best algorithm for the problem.

\subsection{Our result}
We present a new algorithm of $O(n\log^2 n/\log\log n)$ time for the weighted case and, therefore, slightly improve the result of~\cite{ref:WangNe20}. Our algorithm follows the framework of Wang and Xue~\cite{ref:WangNe20} but provides a more efficient solution to a bottleneck subproblem in their algorithm, called the {\em offline insertion-only additively-weighted nearest neighbor problem with a separating line} (or IOAWNN-SL for short). Specifically, we are given a sequence of $n$ operations of the following two types: (1) Insertion: Insert a weighted point to $P$ (which is $\emptyset$ initially); (2) Query: given a query point $q$, find the {\em additively-weighted nearest neighbor} of $q$ in $P$, where the distance between $q$ to any point $p\in P$ is defined to be their Euclidean distance plus the weight of $p$. The points of $P$ and all the query points are required to be separated by a given line (say the $x$-axis). The goal of the problem is to answer all queries.

Wang and Xue~\cite{ref:WangNe20} solved the IOAWNN-SL problem in $O(n\log^2 n)$ time using the traditional logarithmic method of Bentley~\cite{ref:BentleyDe79}. This is the bottleneck of their overall shortest path algorithm; all other parts of the algorithm take $O(n\log n)$ time. We derive a more efficient algorithm that solves IOAWNN-SL in $O(n\log^2 n/\log\log n)$ time (see Theorem~\ref{thm:IOAWNN-SL} for details). Plugging this result into the algorithm framework of Wang and Xue~\cite{ref:WangNe20} solves the shortest path problem in $O(n\log^2 n/\log\log n)$ time.

\begin{theorem} \label{thm:IOAWNN-SL}
    Let $P$ be an initially empty set of $n$ weighted points in the plane such that all points of $P$ lie below the $x$-axis $\ell$. There exists a data structure $\calD(P)$ of $O(n)$ space supporting the following operations:
    \begin{enumerate}
        \item Insertion: Insert a weighted point $p$ below $\ell$ to $P$ in amortized $O(\log^2 n / \log \log n)$ time.
        \item Query: Given a query point $q$ above $\ell$, find the additively-weighted nearest neighbor to $q$ in $P$ in worst-case $O(\log^2 n / \log \log n)$ time.
    \end{enumerate}
\end{theorem}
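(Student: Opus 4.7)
My plan is to combine a standard static additively-weighted Voronoi diagram (AWVD) with a base-$b$ variant of Bentley's logarithmic method, choosing $b = \Theta(\log n)$ to balance the two sources of $\log n$ overhead. The static building block is that for any fixed set $P'$ of $m$ weighted points below $\ell$, the AWVD of $P'$ has $O(m)$ complexity and can be built in $O(m\log m)$ time; paired with a planar point-location structure, it uses $O(m)$ space and answers additively-weighted NN queries from above $\ell$ in $O(\log m) \le O(\log n)$ time. Everything else is dynamization.

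Set $b = \lceil \log n \rceil$ and $K = \lceil \log_b n \rceil = O(\log n/\log\log n)$. I would maintain $K$ levels $L_0, \ldots, L_{K-1}$ where level $L_i$ holds a small number of AWVDs on subsets of size $O(b^i)$. An insertion creates a size-$1$ block at $L_0$; whenever $L_i$ accumulates enough blocks, they are merged into a single block at $L_{i+1}$ (cascading as needed) and the affected AWVD is rebuilt. The standard amortized analysis of base-$b$ Bentley dynamization says each element participates in $O(K)$ rebuilds, with per-element contribution $O(\log b^{i+1}) = O((i+1)\log b)$ at level $i+1$; summing gives an amortized insertion cost of $O(K^2\log b) = O(\log^2 n/\log\log n)$. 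The level sizes form a geometric progression, keeping total space at $O(n)$. A query that touches one point location per level then costs $O(K \log n) = O(\log^2 n/\log\log n)$ in the worst case.

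\noindent\textbf{Main obstacle.} The delicate point is ensuring that each query really touches only $O(K)$ AWVDs, rather than the $(b-1)K = \Theta(\log^2 n/\log\log n)$ blocks the naive base-$b$ Bentley scheme produces (which would inflate query time to $O(\log^3 n/\log\log n)$ and erase the savings). To avoid this blow-up I would either (i) enforce a single-AWVD-per-level invariant via a more eager in-level merging policy, charging the extra rebuild cost to the same element-movement amortization, or (ii) permit up to $b-1$ AWVDs per level but equip each level with a joint point-location structure answering queries over all its AWVDs in total $O(\log n)$ time (for instance via fractional cascading over their underlying search arrays, or through a shared merged AWVD maintained alongside the cascade). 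Either route reduces to the same technical task: designing a per-level mechanism whose construction fits inside the $O(\log^2 n/\log\log n)$ amortized insertion budget while still delivering an $O(\log n)$ aggregate per-level query. This balance between the cascade and the per-level structure is the step I expect to require the most care; once it is in place, the bounds of Theorem~\ref{thm:IOAWNN-SL} follow by the calculations above.
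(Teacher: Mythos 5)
Your high-level plan is on the right track and is essentially the same skeleton as the paper: levels whose sizes form a geometric progression with ratio $\Theta(\log n)$, a single $\vd_+$ per level, recursion depth $K = O(\log n/\log\log n)$, amortized rebuild on overflow, and a point-location query per level for $O(K\log n)$ query time. You also correctly identify the obstacle: naive base-$b$ Bentley leaves up to $b-1$ blocks per level, which inflates the query cost. Where the proposal has a genuine gap is in the repair of that obstacle.

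The trouble is that your amortized insertion bound, ``each element participates in $O(K)$ rebuilds, contributing $O((i+1)\log b)$ at level $i+1$,'' is the accounting for the \emph{naive} scheme, where each element crosses each level exactly once. In your option (i), enforcing one AWVD per level means level $i+1$ must be re-formed every time level $i$ promotes into it, i.e.\ $\Theta(b)$ times before level $i+1$ itself overflows, so each element participates in $\Theta(b)$ rebuilds per level rather than $O(1)$. If rebuilds cost $O(m\log m)$, the per-element amortized cost becomes $O\bigl(b\,K^2\log b\bigr) = O(\log^3 n/\log\log n)$, which is worse than plain binary Bentley, not better. Option (ii) does not obviously save you either: one-dimensional fractional cascading does not transfer to point location in additively-weighted Voronoi diagrams, and a ``shared merged AWVD maintained alongside'' puts you back into the same $\Theta(b)$-rebuilds-per-level regime. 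The calculation only closes if the per-level re-formation is done in \emph{linear} time, i.e.\ if you can merge two $\vd_+$ structures of total size $m$ in $O(m)$ time; that replaces the $O(m\log m)$ rebuild by $O(m)$ and drops the $\log b$ factor exactly where it is needed, giving $O(bK) = O(\log^2 n/\log\log n)$ for insertions.

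This linear-time merge of $\vd_+(S_a)$ and $\vd_+(S_b)$ is precisely Theorem~\ref{thm:merge} in the paper and is its main technical contribution (Section~\ref{sec:merge}): it adapts Kirkpatrick's contour-tracing merge and crucially uses the separating line $\ell$ to find a seed point on every contour component (Lemma~\ref{lem:contour_intersect_ell}), since Kirkpatrick's MST-based seeding fails for the weighted diagram (Observation~\ref{obser:example}). The paper then packages the levels as a recursive structure $\calD(P)=\bigl(\calD(P'),\,\vd_+(P\setminus P')\bigr)$ with $|P'|\le|P|/\log|P|$, which is the same geometric-level idea as yours, but the insertion recurrence $I(n)=I(n/\log n)+O(\log n)$ only holds \emph{because} the rebuild of the outer $\vd_+$ is linear via Theorem~\ref{thm:merge}. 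So the missing ingredient in your proposal is not a bookkeeping detail you could ``charge to the same amortization,'' but a new geometric algorithm; without it the claimed insertion bound does not hold.
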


Our algorithm for Theorem~\ref{thm:IOAWNN-SL} needs to solve a subproblem about merging two additively weighted Voronoi diagrams. Specifically, let $S_a$ and $S_b$ each be a subset of $n$ weighted points in the plane such that all points of $S_a\cup S_b$ are below the $x$-axis $\ell$. Let $\vd(S_a)$ denote the additively-weighted Voronoi diagram of $S_a$, and $\vd_+(S_a)$ denote the portion of $\vd(S_a)$ above $\ell$. Similarly, define $\vd(S_b)$ and $\vd_+(S_b)$ for $S_b$, and define $\vd(S_a\cup S_b)$ and $\vd_+(S_a\cup S_b)$ for $S_a\cup S_b$. Given $\vd_+(S_a)$ and $\vd_+(S_b)$, our problem is to compute $\vd_+(S_a\cup S_b)$. We solve the problem in $O(n)$ time by modifying Kirkpatrick's algorithm for merging two standard Voronoi diagrams~\cite{ref:KirkpatrickEf79} and by making use of the property that $\vd_+(S_a\cup S_b)$ and all points of $S_a\cup S_b$ are separated by $\ell$. Note that directly applying Kirkpatrick's algorithm does not work (see Section~\ref{sec:IOAWNN-SL} for more details). It would be more interesting to have a linear time algorithm to compute the complete diagram $\vd(S_a\cup S_b)$ by merging $\vd(S_a)$ and $\vd(S_b)$. Our technique, however, does not immediately work because it relies on the separating line $\ell$. Nevertheless, we hope our result will serve as a stepping stone towards achieving that goal. We summarize our result in the following theorem.

\begin{theorem} \label{thm:merge}
    Let $S_a$ and $S_b$ each be a set of $n$ weighted points in the plane such that all the points of $S_a\cup S_b$ are below the $x$-axis $\ell$. Given $\vd_+(S_a)$ and $\vd_+(S_b)$, $\vd_+(S_a \cup S_b)$ can be constructed in $O(n)$ time.
\end{theorem}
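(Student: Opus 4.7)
The plan is to adapt Kirkpatrick's divide-and-conquer merge for Voronoi diagrams~\cite{ref:KirkpatrickEf79} to the additively-weighted setting, using the separating line $\ell$ in an essential way. Define the \emph{merge contour} $\Gamma$ as the set of points in the upper half-plane lying on the boundary between an $S_a$-cell and an $S_b$-cell of $\vd_+(S_a\cup S_b)$; each point of $\Gamma$ is equidistant in the additively-weighted sense to its nearest site in $S_a$ and its nearest site in $S_b$, so $\Gamma$ is a union of arcs of bisecting hyperbolas. Once $\Gamma$ is known, $\vd_+(S_a\cup S_b)$ is assembled by retaining the portion of $\vd_+(S_a)$ on the $S_a$-side of $\Gamma$, the portion of $\vd_+(S_b)$ on the $S_b$-side, and gluing the two pieces along $\Gamma$; the whole task therefore reduces to computing $\Gamma$ in $O(n)$ time.

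The first step is a structural claim: every non-empty cell of $\vd_+(S_a\cup S_b)$ is incident to $\ell$. Because all sites lie below $\ell$ and an additively-weighted Voronoi cell is star-shaped with respect to its site, any point $q$ in the cell $C$ of a site $p$ can be connected to $p$ by a straight segment inside $C$, and this segment must cross $\ell$ since $p$ lies below $\ell$ while $q$ lies above $\ell$. As a corollary, $\Gamma$ contains no closed loop in the interior of the upper half-plane: such a loop would enclose a region occupied entirely by cells of one side, and those cells would be forced to lie strictly above $\ell$, contradicting the structural claim. Hence every component of $\Gamma$ has at least one endpoint on $\ell$ or at infinity. This is the precise point at which the separating line is used; without it a direct Kirkpatrick-style merge is not guaranteed to discover every component of $\Gamma$.

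Given this structure, I would enumerate the endpoints of $\Gamma$ on $\ell$ as follows. Both $\vd_+(S_a)$ and $\vd_+(S_b)$ induce sequences of intervals on $\ell$, each of size $O(n)$, and a single linear sweep merges the two sequences and finds every position along $\ell$ where the nearest site of the union switches between an $S_a$-site and an $S_b$-site; each such position is an $\ell$-endpoint of a component of $\Gamma$. Endpoints at infinity are found analogously by merging the cyclic sequences of unbounded cells of the two input diagrams. Starting from each endpoint with its known owner pair $(p_a,p_b)$, the corresponding curve of $\Gamma$ is traced by walking simultaneously through $\vd_+(S_a)$ and $\vd_+(S_b)$: follow the bisector of $p_a$ and $p_b$ until it meets an edge of one of the input diagrams, update the owner on that side to the new neighboring site, and continue until another endpoint is reached. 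A planarity-based charging argument shows that each edge of $\vd_+(S_a)$ and $\vd_+(S_b)$ is visited only $O(1)$ times during the entire trace, and the stitching step that assembles $\vd_+(S_a\cup S_b)$ from $\Gamma$ and the input diagrams is also linear, giving the claimed $O(n)$ total time.

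The main technical obstacle I expect is the walking step itself. Unlike Kirkpatrick's unweighted setting in which input edges are straight segments, here every edge of $\vd_+(S_a)$ and $\vd_+(S_b)$ is a hyperbolic arc and the traced bisector is another hyperbolic arc, so the possible local events are richer. A careful case analysis is needed to classify the events (crossing an input edge, passing through an input vertex, terminating on $\ell$ or at infinity), to show that each event can be processed in $O(1)$ time, and to verify that the amortized charge to each input edge remains $O(1)$ despite the curvier geometry.
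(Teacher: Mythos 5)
Your overall strategy is the same as the paper's: define the contour (the edges of the merged diagram separating an $S_a$-cell from an $S_b$-cell), use the separating line $\ell$ to locate a starting point on every relevant contour component, trace each component through both input diagrams, and stitch. The seeding step along $\ell$ (merging the two sequences of $\ell$-edges and testing bisector intersections) matches the paper. However, there is a genuine gap exactly at the step you yourself flag as the ``main technical obstacle'': advancing the trace through a Voronoi region in amortized constant time. Your claim that ``a planarity-based charging argument shows that each edge of $\vd_+(S_a)$ and $\vd_+(S_b)$ is visited only $O(1)$ times'' conflates two different quantities. It is true that the contour \emph{crosses} each input edge $O(1)$ times overall (each crossing is a vertex of $\vd(S_a\cup S_b)$, of which there are $O(n)$), but that does not bound the \emph{search} cost: when the traced bisector lies inside a region $R_a$ with many edges, you must find which edge it exits through, and since the contour here is not monotone (unlike Shamos--Hoey), it can re-enter a high-complexity region many times, forcing repeated rescans of its boundary and a superlinear total. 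The paper resolves precisely this by Kirkpatrick's device: subdivide each region of $\vd_+(S_a)$ and $\vd_+(S_b)$ into sub-regions of at most four edges using \emph{spokes} (segments from each site to the vertices of its region, valid because regions are star-shaped), so each tracing step is $O(1)$, and then prove separately that the contour crosses each spoke at most once (again via star-shapedness), giving an $O(n)$ bound on the number of tracing steps. Without this subdivision, or a worked-out substitute, the $O(n)$ claim does not follow.

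A second, smaller issue: your corollary only excludes closed loops of the contour strictly above $\ell$, and you then invoke a vague ``merge of cyclic sequences of unbounded cells'' to find unbounded components that avoid $\ell$. In fact such components cannot exist: if a contour component lies strictly above $\ell$ (closed \emph{or} unbounded), the closed lower half-plane, being connected and disjoint from it, lies in one of the two regions it bounds, so the other region lies strictly above $\ell$; that region contains an entire Voronoi cell of $\vd(S_a\cup S_b)$ and hence a site, contradicting that all sites lie below $\ell$ (this is the paper's Lemma~\ref{lem:contour_intersect_ell}). So the at-infinity machinery is unnecessary, and as described it is also unsubstantiated for additively-weighted diagrams, where convex-hull-style shortcuts from the unweighted setting need not carry over (compare the paper's Observation~\ref{obser:example} on the failure of the minimum-spanning-tree/dual-graph property).
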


\paragraph{Algebraic decision tree model.} The above result holds for the standard real RAM model. Under the algebraic decision tree model in which we only count comparisons toward the time complexity, using a technique recently developed by Chan and Zheng~\cite{ref:ChanHo23}, we show that the problem IOAWNN-SL can be solved using $O(n\log n)$ comparisons. This leads to an $O(n\log n)$ time algorithm for the shortest path problem in weighted unit-disk graphs under the algebraic decision tree model, matching the $\Omega(n\log n)$ lower bound~\cite{ref:CabelloSh15}.

\paragraph{Outline.} The rest of the paper is organized as follows. We describe the shortest path algorithm framework in Section~\ref{sec:SSSP}, mainly by reviewing Wang and Xue's algorithm~\cite{ref:WangNe20}. In Section~\ref{sec:IOAWNN-SL}, we introduce our data structure for IOAWNN-SL and thus prove Theorem~\ref{thm:IOAWNN-SL}. Section~\ref{sec:merge} presents our Voronoi diagram merging algorithm for Theorem~\ref{thm:merge}. We finally describe the algebraic decision tree algorithm in Section~\ref{sec:decisiontree}.

\section{The shortest path algorithm}
\label{sec:SSSP}
In this section, we describe the shortest path algorithm. We begin with reviewing Wang and Xue's algorithm~\cite{ref:WangNe20} and explain why the IOAWNN-SL problem is a bottleneck (we only state their algorithm and refer the interested reader to their paper~\cite{ref:WangNe20} for the correctness analysis). We will show how our solution to IOAWNN-SL in Theorem \ref{thm:IOAWNN-SL} can lead to an $O(n\log^2 n/\log\log n)$ time algorithm for the shortest path problem.

Given a set $V$ of $n$ points in $\R^2$ and a source point $s \in V$, we wish to compute shortest paths from $s$ to all vertices in the weighted unit-disk graph $G = (V, E)$. We use $e_{uv} \in E$ to denote the edge between two points $u, v \in V$ and $w(e_{uv})$ to denote the weight of the edge. Recall that $w(e_{uv}) = ||u - v|| \leq 1$, where $||u - v||$ denotes the Euclidean distance between $u$ and $v$. The algorithm will compute a table $\dist[\cdot]$ such that after the algorithm finishes, $\dist[v]$ is the length of a shortest path from $s$ to $v$ for all $v\in V$. Using a predecessor table, we could also maintain a shortest path tree, but we will omit the discussion about it.

\begin{figure}
    \centering
    \includegraphics[width=1.6in]{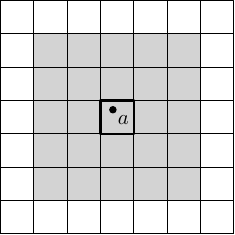}
    \caption{Illustrating $\square_a$ (the central highlighted square) and $\boxplus_a$ (the gray area). }
    \label{fig:patch}
\end{figure}

We overlay the plane with a grid $\Gamma$ of square cells with side lengths $1/2$. For any point $a\in \R^2$, denote by $\square_a$ the cell of $\Gamma$ such that $a \in \square_a$, and $\boxplus_a$ the $5 \times 5$ patch of cells in $\Gamma$ centered around $\square_a$ (see Figure~\ref{fig:patch}). For a set of points $A \subseteq \R^2$ and $a \in A$, we use $A_{\square_a} = A \cap \square_a$ and $A_{\boxplus_a} = A \cap \boxplus_a$. The algorithm makes use of the following properties: (1) For any two points $a,b$ in the same cell of $\Gamma$, $||a-b||\leq 1$ holds; (2) if $||a-b||\leq 1$, then $b$ is in $\boxplus_a$ and $a$ is in $\boxplus_b$.

Wang and Xue's algorithm is summarized in Algorithm~\ref{algo:SSSP}. It can be understood by contrasting with Dijkstra's algorithm, which we write in Algorithm \ref{algo:Dijkstra} using similar notation. In particular, a subroutine \textsc{Update}$(A, B)$ is used to ``push'' the current candidate shortest path information from $A$ to $B$ where $A, B \subseteq V$. Specifically, for each point $b \in B$, we find:
\begin{equation} \label{eq:pb}
    p_b = \argmin_{\{a \in A : e_{ab} \in E\}} \dist[a] + w(e_{ab}).
\end{equation}
We then update $\dist[b]$ to $\min\{\dist[b], \dist[p_b] + w(e_{p_bb})\}$.

\begin{algorithm}
    \DontPrintSemicolon
    \caption{Wang and Xue's algorithm \cite{ref:WangNe20}} \label{algo:SSSP}
    $\dist[a] \gets \infty$ for all $a \in V$\;
    $\dist[s] \gets 0$\;
    $A \gets V$\;
    \While{$A \neq \emptyset$}{
        $c \gets \argmin_{a \in A} \{\dist[a]\}$\;
        \textsc{Update}$(A_{\boxplus_c}, A_{\square_c})$ \tcp*{First Update} \label{ln:first}
        \textsc{Update}$(A_{\square_c}, A_{\boxplus_c})$ \tcp*{Second Update} \label{ln:second}
        $A \gets A \setminus A_{\square_c}$\;
    }
    \Return{$\dist[\cdot]$}\;
\end{algorithm}

\begin{algorithm}
    \DontPrintSemicolon
    \caption{Dijkstra's algorithm} \label{algo:Dijkstra}
    $\dist[a] \gets \infty$ for all $a \in V$\;
    $\dist[s] \gets 0$\;
    $A \gets V$\;
    \While{$A \neq \emptyset$}{
        $c \gets \argmin_{a \in A} \{\dist[a]\}$\;
        \textsc{Update}$(\{c\}, A)$\;
        $A \gets A \setminus \{c\}$\;
    }
    \Return{$\dist[\cdot]$}\;
\end{algorithm}

The main difference between Wang and Xue's algorithm and Dijkstra's is that instead of operating on single vertices, Wang and Xue's algorithm operates on cells of $\Gamma$. Generally speaking, the first update (Line~\ref{ln:first}) in Algorithm~\ref{algo:SSSP}  is to update the shortest path information for the points in $A_{\square_c}$ using the shortest path information of their neighbors. The second update is to use the shortest path information for the points in $V_{\square_c}$ to update the shortest path information of their neighbors. Wang and Xue prove that after the first update, the shortest path information for all points of $V_{\square_c}$ is correctly computed.

Wang and Xue give an $O(n \log^2 n)$ time solution for the second update, i.e., Line~\ref{ln:second}. The rest of Algorithm \ref{algo:SSSP} takes $O(n \log n)$ time. We will improve the runtime for the second update to $O(n \log^2 n / \log \log n)$ using Theorem~\ref{thm:IOAWNN-SL}, which improves the runtime for Algorithm~\ref{algo:SSSP} to $O(n \log^2 n / \log \log n)$. The details are discussed in the following.

\subsection{The second update}

To implement the second update \textsc{Update}$(A_{\square_{c}}, A_{\boxplus_{c}})$, since $A_{\boxplus_c}$ has $O(1)$ cells, it suffices to perform \textsc{Update}$(A_{\square_{c}}, A_{\square})$ for each cell $\square\in \boxplus_c$ individually.

If $\square$ is $\square_c$, then $A_{\square_c} = A_{\square}$. Since the distance between two points in $\square_c$ is at most $1$, \textsc{Update}$(A_{\square_{c}}, A_{\square})$ can be performed in $O(|A_{\square_c}|\log |A_{\square_c}|)$ time (and $O(|A_{\square_c}|)$ space) by constructing the additively-weighted Voronoi diagram for $A_{\square_c}$~\cite{ref:FortuneA87}.

If $\square$ is not $\square_c$, a useful property is that $\square$ and $\square_c$ are separated by an axis-parallel line. To perform \textsc{Update}$(A_{\square_{c}}, A_{\square})$, Wang and Xue~\cite{ref:WangNe20} proposed Algorithm~\ref{algo:update} below.

\begin{algorithm}
    \DontPrintSemicolon
    \caption{\textsc{Update}$(A, B)$ from \cite{ref:WangNe20}} \label{algo:update}
    $\dist'[a] \gets \dist[a]$ for $a \in A$\;
    Sort the points in $A = \{a_1, \ldots, a_{|A|}\}$ so that $\dist'[a_1] \leq \ldots \leq \dist'[a_{|A|}]$\;
    \For{$i = 1, \ldots, |A|$}{
        $B_i \gets \{b \in B : e_{a_i b} \in E \text{ and } e_{a_j b} \notin E \text{ for all } j < i\}$\;
    }
    $U \gets \emptyset$\;
    \For{$i = |A|, \ldots, 1$}{ \label{line:forstart}
    $U \gets U \cup \{a_i\}$\; \label{ln:insert}
    \For{$b \in B_i$}{
    $p = \argmin_{u \in U} \{\dist'[u] + w(e_{ub})\}$\; \label{ln:findp}
    $\dist[b] \gets \min\{\dist[b], \dist'[p] + w(e_{pb})\}$\;  \label{line:forend}
    }
    }
\end{algorithm}

The correctness of Algorithm \ref{algo:update} hinges on the fact that $p$ found by Line~\ref{ln:findp}  is the same as $p_b$ in Equation \eqref{eq:pb}. This is seen by arguing that $p_b \in U$ and $e_{pb} \in E$.

We now analyze the runtime of Algorithm~\ref{algo:update}. Sorting $A$ takes $O(|A|\log |A|)$ time. Computing the subsets $B_i$, $1\leq i\leq |A|$, can be done in $O((|A|+|B|)\log (|A|+|B|))$ time (and $O(|A|+|B|)$ space)~\cite{ref:WangNe20}. The for loop (Lines \ref{line:forstart}--\ref{line:forend}) is an instance of the IOAWNN-SL problem introduced in Section~\ref{sec:intro}. Indeed, if we assign each point $u$ in $U$ a weight equal to $\dist'[u]$, then $p$ in Line~\ref{ln:findp} is essentially the additively-weighted nearest neighbor of $b$ in $U$. The set $U$ is dynamically changed with point insertions in Line~\ref{ln:insert}. As such, by Theorem~\ref{thm:IOAWNN-SL}, the for loop can be implemented in $O(k\log^2 k/\log\log k)$ time (and $O(k)$ space) with $k=|A|+|B|$. Therefore, \textsc{Update}$(A_{\square_{c}}, A_{\square})$ can be performed in $O(k \log^2 k/\log\log k)$ time and $O(k)$ space, with $k=|A_{\square_c}|+|A_{\square}|$.

In summary, since $A_{\boxplus_c}$ has $O(1)$ cells, the second update \textsc{Update}$(A_{\square_{c}}, A_{\boxplus_{c}})$ can be implemented in $O(|A_{\boxplus_{c}}| \log^2 |A_{\boxplus_{c}}|/\log\log |A_{\boxplus_{c}}|)$ time and $O(|A_{\boxplus_{c}}|)$ space as $A_{\square_{c}}\subseteq A_{\boxplus_{c}}$. As analyzed in \cite{ref:WangNe20}, the total sum of $|A_{\boxplus_{c}}|$ in the entire Algorithm~\ref{algo:SSSP} is $O(n)$. This leads to the following result.

\begin{theorem} \label{thm:SSSPimprove}
    Given a set $V$ of $n$ points in the plane and a source point $s$,
    shortest paths from $s$ to all other vertices in the weighted unit-disk graph $G = (V, E)$ can be computed in $O(n \log^2 n / \log \log n)$ time and $O(n)$ space.
\end{theorem}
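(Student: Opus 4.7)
The plan is to substitute Theorem~\ref{thm:IOAWNN-SL} directly into the analysis Section~\ref{sec:SSSP} has already laid out for Wang and Xue's Algorithm~\ref{algo:SSSP}; no structural change to the algorithm is required, so the proof is an accounting argument.

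First I would bound the cost of a single iteration of the while loop with chosen cell $\square_c$. The first update is handled by building the static additively-weighted Voronoi diagram on $A_{\square_c}$~\cite{ref:FortuneA87} in $O(|A_{\square_c}|\log|A_{\square_c}|)$ time. For the second update, each of the $O(1)$ cells $\square\in\boxplus_c$ with $\square\neq\square_c$ yields an IOAWNN-SL instance of size $k=|A_{\square_c}|+|A_\square|$; by Theorem~\ref{thm:IOAWNN-SL} this costs $O(k\log^2 k/\log\log k)$ time and $O(k)$ space, which dominates the $O(k\log k)$ preprocessing inside Algorithm~\ref{algo:update}. Summing over the constantly many cells of $\boxplus_c$ gives $O(|A_{\boxplus_c}|\log^2|A_{\boxplus_c}|/\log\log|A_{\boxplus_c}|)$ per iteration.

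Next I would sum over all iterations. The key combinatorial fact, already used in~\cite{ref:WangNe20}, is $\sum_c |A_{\boxplus_c}|=O(n)$: any point $a$ contributes to $|A_{\boxplus_c}|$ only for the $O(1)$ cells whose $5\times 5$ patch contains $\square_a$, and only while $a$ is still in $A$, a membership that ends as soon as $\square_a$ itself is processed. Because $f(k)=k\log^2 k/\log\log k$ is non-decreasing (equivalently, $f(k)/k$ is non-decreasing for sufficiently large $k$), I can upper bound each per-iteration cost by $|A_{\boxplus_c}|\cdot\log^2 n/\log\log n$, so the union of updates across the whole algorithm takes $O(n\log^2 n/\log\log n)$ time.

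Finally I would observe that every other component of Algorithm~\ref{algo:SSSP} — bucketing $V$ into $\Gamma$, maintaining $A$ under global minimum extractions and cell-wise deletions, and constructing the sets $B_i$ inside Algorithm~\ref{algo:update} — contributes only $O(n\log n)$ in total, exactly as in~\cite{ref:WangNe20}, and is therefore dominated. The $O(n)$ space bound holds because each iteration's auxiliary structures occupy $O(|A_{\boxplus_c}|)$ space that is released at the end of the iteration, while the persistent state ($A$, the grid, and $\dist[\cdot]$) is $O(n)$. I do not anticipate a real obstacle: the only delicate point is verifying that the concave-summation argument applies to the factor $\log^2 k/\log\log k$, which it does since this function grows slower than linearly in $k$ but its ratio to $k$ is monotone.
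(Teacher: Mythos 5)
Your proposal is correct and follows essentially the same route as the paper: plug Theorem~\ref{thm:IOAWNN-SL} into Wang and Xue's framework for the second update, bound each iteration by $O(|A_{\boxplus_c}|\log^2|A_{\boxplus_c}|/\log\log|A_{\boxplus_c}|)$, and sum using $\sum_c |A_{\boxplus_c}|=O(n)$. The only difference is that you spell out the charging argument for this sum and the monotonicity step, which the paper simply cites from~\cite{ref:WangNe20}.
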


\section{The offline insertion-only additively-weighted nearest neighbor problem with a separating line (IOAWNN-SL)}
\label{sec:IOAWNN-SL}
In this section, we prove Theorem~\ref{thm:IOAWNN-SL}. We follow the notation in Section~\ref{sec:intro}. In particular, for any subset $P'\subseteq P$, $\vd_+(P')$ denotes the portion of the additively-weighted Voronoi diagram of $P'$ above the $x$-axis $\ell$.

Our data structure $\calD(P)$ for Theorem~\ref{thm:IOAWNN-SL} consists of two components: $\calD(P')$ and $\vd_+(P \setminus P')$ for some subset $P' \subseteq P$; we maintain the invariant $|P'| \leq |P| / \log |P|$. We also build a point location data structure on $\vd_+(P \setminus P')$ so that, given a query point, the cell of $\vd_+(P \setminus P')$ containing the point can be found in $O(\log |P \setminus P'|)$ time~\cite{ref:EdelsbrunnerOp86,ref:KirkpatrickOp83}. As such, $\calD(P)$ is a recursive structure: $\calD(P)$ is defined in terms of $\calD(P')$ which in turn is defined in terms of $\calD(P'')$ and so on. As the base case, if $|P|\leq c$ for some constant $c$, then we simply let $\calD(P)=\vd_+(P)$. Similar recursive data structures have been used before in the literature, e.g., \cite{ref:ChanDy20,ref:OvermarsTh83}.

In the following, we discuss how to handle the two operations: insertions and queries.

\paragraph{Queries.}
Given a query point $q$ above $\ell$, we first find the nearest neighbor of $q$ in $P\setminus P'$ using a point location query on $\vd_+(P\setminus P')$. Then, we find the nearest neighbor of $q$ in $P'$ using $\calD(P')$ recursively. Among the two ``candidate'' neighbors, we return the one nearer to $q$ as the answer. For the query time, since a point location query on $\vd_+(P\setminus P')$ takes $O(\log |P\setminus P'|)$ time, the query time $Q(n)$ satisfies the following recurrence: $Q(n)=Q(n/\log n)+O(\log n)$, which solves to $Q(n)=O(\log^2 n/\log\log n)$. Therefore, each query operation takes worst-case $O(\log^2 n/\log\log n)$ time.

\paragraph{Insertions.}
To insert a point $p$ below $\ell$ to $P$, we first insert $p$ to $P'$ recursively. We then check if the invariant $|P'| \leq |P| / \log |P|$ still holds. If not, we set $P' = \emptyset$, and then construct $\vd_+(P)$ as follows. First, we construct $\vd_+(P')$ recursively. Recall that $\vd_+(P\setminus P')$ is already available. We compute $\vd_+(P)$ by merging $\vd_+(P')$ and $\vd_+(P\setminus P')$, which takes $O(|P|)$ time by Theorem~\ref{thm:merge}. Finally, we construct a point location data structure on $\vd_+(P)$ in $O(|P|)$ time~\cite{ref:EdelsbrunnerOp86,ref:KirkpatrickOp83}. This finishes the insertion operation.

We now analyze the insertion time. First, suppose that we need to construct $\vd_+(P)$ due to the insertion of $p$. Then, the construction time $T(n)$ for $\vd_+(P)$ satisfies the following recurrence: $T(n)=T(n/\log n)+O(n)$, which solves to $T(n)=O(n)$.

Since $P'=\emptyset$ once $\vd_+(P)$ is constructed, we only need to construct $\vd_+(P)$ every $\Theta(n / \log n)$ insertions. As constructing $\vd_+(P)$ takes $O(|P|)$ time, the amortized time for constructing $\vd_+(P)$ per insertion is $O(\log n)$. As such, if $I(n)$ is the amortized time for each insertion, we have the following recurrence: $I(n) = I(n / \log n) + O(\log n)$. The recurrence solves to $I(n) = O(\log^2 n / \log \log n)$. We conclude that each insertion takes $O(\log^2 n / \log \log n)$ amortized time.

Note that the space $S(n)$ of $\calD(P)$ satisfies the following recurrence: $S(n)=S(n/\log n)+O(n)$, which solves to $S(n)=O(n)$. This proves Theorem \ref{thm:IOAWNN-SL}.

\section{Merging two additively-weighted Voronoi diagrams}
\label{sec:merge}
In this section, we prove Theorem \ref{thm:merge}. For completeness, we first introduce the formal definition of additively-weighted Voronoi diagrams and then present our merging algorithm.

\subsection{Additively-weighted Voronoi diagrams}
Let $S = \{s_1, s_2, \ldots, s_n\}$ be a set of $n$ points in the plane such that each point $s_i$ has a weight $w_i$ that can be positive, zero, or negative. Following the literature, we refer to points of $S$ as {\em sites}. We define the additively-weighted Euclidean distance (or {\em weighted distance} for short) of a point $p \in \R^2$ to a site $s_i$ as $d(s_i, p) = ||s_i - p|| + w_i$.

The additively-weighted Voronoi diagram of $S$, denoted by $\vd(S)$, partitions the plane into Voronoi regions, Voronoi edges, and Voronoi vertices; see Figure~\ref{fig:AWVD_DL}. Each Voronoi region $R_i$ is associated with a site $s_i$ and is defined to be the set of points that are closer to $s_i$ than to any other site measured by the weighted distances: $$R_i = \{p \in \R^2 : d(s_i, p) < d(s_j, p), \forall j \neq i\}.$$

\begin{figure}
    \centering
    \includegraphics[width=3in]{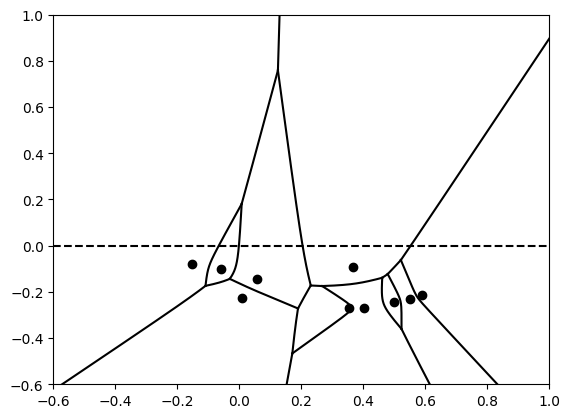}
    \caption{Illustrating an additively-weighted Voronoi diagram. The dashed horizontal line is the $x$-axis $\ell$.}
    \label{fig:AWVD_DL}
\end{figure}

Each Voronoi edge $E_{ij}$ is associated with two distinct sites $s_i$ and $s_j$ and is defined to be the set of points that are equidistant to $s_i$ and $s_j$ and closer to these sites than any other sites:
$$E_{ij} = \{p \in \R^2 : d(s_i, p) = d(s_j, p) < d(s_k, p), \forall k \neq i, j\}.$$

Each Voronoi vertex is associated with three or more distinct sites and is defined to be the point that is equidistant to these sites and closer to these sites than any other site.

We will also talk about the {\em bisector} between two sites, which is defined to be the set of points in the plane that are equidistant to the two sites:
$$B(s_i, s_j) = \{p \in \R^2 : d(s_i, p) = d(s_j, p)\}.$$
$B(s_i, s_j)$ is a hyperbolic arc whose foci are $s_i$ and $s_j$. Note that a Voronoi edge associated with two sites is a subset of their bisector.

Observation \ref{obs:basic_prop} states some properties about $\vd(S)$ that are well known in the literature; we will use these properties in our algorithm.

\begin{observation} \label{obs:basic_prop}  {\em (\cite{ref:FortuneA87})}
    \begin{enumerate}
        \item Every Voronoi region of $\vd(S)$ must contain its associated site.
        \item Each Voronoi region $R_i$ of $\vd(S)$ is star-shaped with respect to its site $s_i$, that is,
              the line segment $\overline{s_ip}$ is inside $R_i$ for any point $p\in R_i$.
        \item The combinatorial size of $\vd(S)$ is $O(|S|)$.
    \end{enumerate}
\end{observation}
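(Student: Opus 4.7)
The plan is to prove the three parts in the order (2), (1), (3): part (2) yields part (1) almost immediately via a Euclidean triangle-inequality argument, while part (3) requires a separate combinatorial argument via Euler's formula on the planar graph of $\vd(S)$.

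For part (2), I would fix an arbitrary $p \in R_i$ and let $q$ be any point on the segment $\overline{s_i p}$. Since $q$ lies between $s_i$ and $p$, we have $\|s_i - q\| = \|s_i - p\| - \|q - p\|$, hence $d(s_i, q) = d(s_i, p) - \|q - p\|$. For any other site $s_j$, the Euclidean triangle inequality yields $\|s_j - q\| \geq \|s_j - p\| - \|q - p\|$, so $d(s_j, q) \geq d(s_j, p) - \|q - p\|$. Subtracting, $d(s_i, q) - d(s_j, q) \leq d(s_i, p) - d(s_j, p) < 0$, which gives $q \in R_i$. Part (1) then follows by the same chain applied at $q = s_i$: if $R_i$ is nonempty and $p$ is any point in it, then $w_i = d(s_i, s_i) \leq d(s_j, s_i)$ for every $j \neq i$, with strictness guaranteed by the standard non-domination assumption $w_i < w_j + \|s_i - s_j\|$. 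This assumption is automatic in this paper's shortest-path application, since weights are shortest-path distances that satisfy the Lipschitz condition imposed by the edges of $G$.

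For part (3), I would invoke the classical lifting argument: map each site $s_i$ with weight $w_i$ to the right-circular cone $z = \|(x,y) - s_i\| + w_i$ in $\R^3$, so that $\vd(S)$ is the vertical projection of the lower envelope of $n$ cones of identical half-aperture. Any two such cones intersect along a single hyperbolic arc, so each pairwise bisector is a connected curve, and by part (2) each Voronoi region is simply connected. Treating $\vd(S)$ (after compactifying at infinity) as a planar graph, Euler's formula $v - e + f = 2$, combined with the face bound $f \leq n + 1$ and the minimum-degree-$3$ inequality $2e \geq 3v$ at Voronoi vertices, yields $v, e \in O(n)$.

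The main obstacle is the strict-inequality subtlety in parts (1) and (2): the triangle-inequality step is an equality precisely when $q$ lies on the ray from $s_j$ through $p$, and one must verify that this degeneracy does not place $q$ onto a Voronoi edge. This is handled by observing that $d(s_i, q) - d(s_j, q)$ is continuous along $\overline{s_i p}$ and strictly negative at $p$, so it cannot vanish in the interior of the segment without contradicting the chain of inequalities above. A parallel concern in part (3) is confirming that the planar graph remains connected after compactification, which is addressed by adjoining a single vertex at infinity where all unbounded edges meet.
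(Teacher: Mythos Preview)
The paper does not prove this observation; it simply cites Fortune~\cite{ref:FortuneA87} and treats the three statements as known facts from the literature. Your proposal supplies a self-contained argument, and that argument is correct and follows the classical lines (triangle inequality for star-shapedness, Euler's formula on the compactified planar graph for the size bound).

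Two minor remarks on your write-up. First, your ``main obstacle'' about the triangle-inequality step possibly becoming an equality is not a genuine obstacle: strictness in the chain $d(s_i, q) - d(s_j, q) \leq d(s_i, p) - d(s_j, p) < 0$ is inherited from the right-hand side (since $p \in R_i$), regardless of whether the triangle-inequality contribution on the left is tight. Second, and for the same reason, your appeal to a non-domination assumption in part~(1) is unnecessary: setting $q = s_i$ in the same chain already yields $d(s_i, s_i) < d(s_j, s_i)$ strictly whenever $R_i$ is nonempty, so $s_i \in R_i$ without any auxiliary hypothesis on the weights. Your remark that the non-domination condition holds in the paper's shortest-path application is therefore a side comment rather than a needed ingredient of the proof.
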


\subsection{Merging algorithm for Theorem \ref{thm:merge}}

We follow the notation introduced in Section~\ref{sec:intro}, e.g., $\ell$, $n$, $S_a$, $S_b$, $\vd(S_a)$, $\vd(S_b)$, $\vd_+(S_a)$, $\vd_+(S_b)$, etc. Let $S=S_a\cup S_b$. Given $\vd_+(S_a)$ and $\vd_+(S_b)$, our goal is to compute $\vd_+(S)$ in $O(n)$ time. For ease of exposition, we make a general position assumption that no point in the plane is equidistant to four points of $S$.

\begin{figure}
    \centering
    \includegraphics[width=3in]{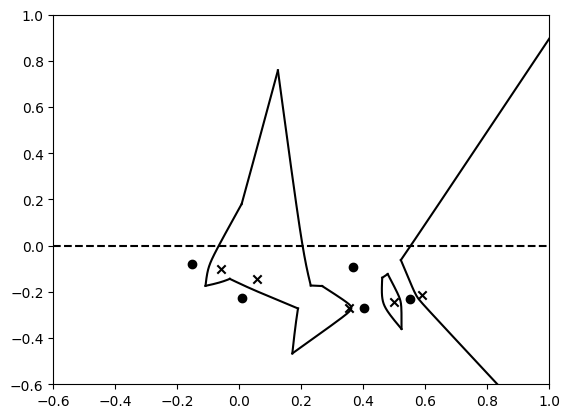}
    \caption{Illustrating the contour between two sets of points. The dashed horizontal line is the $x$-axis $\ell$.}
    \label{fig:contour}
\end{figure}

Our strategy is to identify the \emph{contour} which consists of edges in the complete Voronoi diagram $\vd(S)$ that are associated with a site in $S_a$ and a site in $S_b$; see Figure~\ref{fig:contour}. Note that the contour may have multiple connected components. The contour partitions the plane into regions $CR_i$ such that $\vd(S) \cap CR_i$ is either $\vd(S_a) \cap CR_i$ or $\vd(S_b) \cap CR_i$ (we show in Lemma~\ref{lem:contour_topology} later that each contour component has the topology of a line or a circle). As such, once we have identified the contour, computing $\vd(S)$ is straightforward. To compute the contour, the idea is to first find a point on each contour component and then trace the component by traversing $\vd(S_a)$ and $\vd(S_b)$ simultaneously. This strategy follows Kirkpatrick's algorithm~\cite{ref:KirkpatrickEf79} for merging two standard Voronoi diagrams. However, we cannot directly apply Kirkpatrick's algorithm because his method for finding a point in each contour component is not applicable to the weighted case. More specifically, his method relies on the property that the Euclidean minimum spanning tree of a point set in the plane must be a subgraph of the dual graph of its standard Voronoi diagram. However, this is not true anymore for the additively-weighted Voronoi diagrams. We make it formally as an observation below.

\begin{observation}\label{obser:example}
    The Euclidean minimum spanning tree of a set of points in the plane is not necessarily a subgraph of the dual graph of the additively-weighted Voronoi diagram of the point set.
\end{observation}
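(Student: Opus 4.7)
The plan is to exhibit a small explicit counterexample. I would choose three collinear sites and assign the middle one a weight large enough that its additively-weighted Voronoi region becomes empty, while in the ordinary Euclidean sense it remains the nearest neighbor of both outer sites.

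Concretely, let $S=\{s_1,s_2,s_3\}$ with $s_1=(0,0)$, $s_2=(1,0)$, $s_3=(2,0)$, and weights $w_1=w_3=0$, $w_2=10$. The pairwise Euclidean distances are $\|s_1-s_2\|=\|s_2-s_3\|=1$ and $\|s_1-s_3\|=2$, so the Euclidean minimum spanning tree of $S$ is the path $\{s_1s_2,\ s_2s_3\}$.

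The key step is to verify that the Voronoi region $R_2$ of $s_2$ in $\vd(S)$ is empty. For every $q\in\R^2$, the triangle inequality yields
\[
d(s_1,q)\;=\;\|q-s_1\|\;\le\;\|q-s_2\|+\|s_1-s_2\|\;=\;\|q-s_2\|+1\;<\;\|q-s_2\|+10\;=\;d(s_2,q),
\]
so $s_1$ is strictly closer to $q$ than $s_2$ for every $q$; hence $R_2=\emptyset$. Consequently, the dual graph of $\vd(S)$ contains only the single edge $s_1 s_3$: the regions $R_1$ and $R_3$ meet along the perpendicular bisector of $s_1 s_3$, namely the vertical line $x=1$.

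Since the EMST edges $s_1 s_2$ and $s_2 s_3$ are both absent from this dual graph, the observation follows. There is no real obstacle in the argument; the only substantive step is the short triangle-inequality bound above, which forces $R_2$ to be empty as soon as $w_2$ exceeds $\|s_1-s_2\|$.
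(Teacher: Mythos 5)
Your proof is correct: the triangle-inequality bound indeed forces $R_2=\emptyset$ once $w_2>\|s_1-s_2\|$, so the dual graph of $\vd(S)$ has only the edge $s_1s_3$, while the EMST uses $s_1s_2$ and $s_2s_3$; this is a valid counterexample to the stated observation. It is, however, a different construction from the paper's. The paper takes four sites $p_1=(0,4)$, $p_2=(3,0)$, $p_3=(0,-4)$, $p_4=(-3,0)$ with weights $-4,0,-4,0$; there every site has a nonempty Voronoi region, and the failure is that the regions of the closest pair $(p_2,p_4)$ --- whose connecting segment must be an EMST edge --- are simply not adjacent, being separated by the regions of the negatively weighted sites. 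Your route is more elementary (three collinear points, a one-line verification), but it exhibits the failure through the degenerate mechanism of a site whose region is empty, so $s_2$ does not even occur as a vertex of the dual graph; a reader could object that this only shows the dual graph can be missing vertices, not that adjacency itself misbehaves. The paper's example buys the stronger statement that even when all sites have nonempty regions, and hence all appear in the dual graph, an EMST edge can fail to be a dual edge, which is the phenomenon actually relevant to why Kirkpatrick's contour-finding step breaks. Either example suffices for the observation as literally stated.
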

\begin{proof}
    Figure \ref{fig:counter_example} gives an example for the observation with $S=\{p_1,p_2,p_3,p_4\}$. It is obtained by setting $p_1 = (0, 4)$, $p_2 = (3, 0)$, $p_3 = (0, -4)$, and $p_4 = (-3, 0)$ with weights $w_1 = -4$, $w_2 = 0$, $w_3 = -4$, and $w_4 = 0$.
    \begin{figure}
        \centering
        \includegraphics[width=3in]{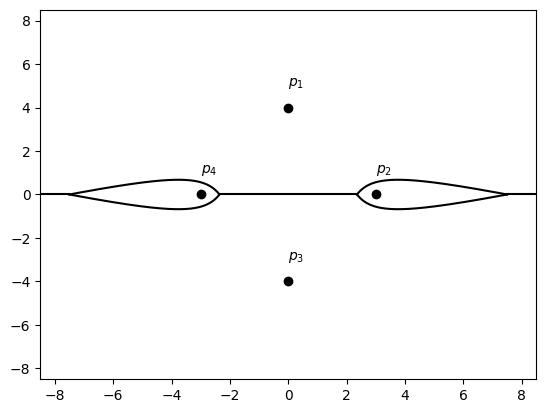}
        \caption{Illustrating the  additively-weighted Voronoi diagram of four points $\{p_1,p_2,p_3,p_4\}$ for Observation~\ref{obser:example}.}
        \label{fig:counter_example}
    \end{figure}
    Since $(p_2,p_4)$ is the closest pair among the four points of $S$, $\overline{p_2p_4}$ must be an edge in the Euclidean minimum spanning tree of $S$. However, there is no edge between $p_2$ and $p_4$ in the dual graph of the additively-weighted Voronoi diagram of $S$ because their Voronoi regions are not adjacent.
\end{proof}

In our problem, we are interested in merging $\vd_+(S_a)$ and $\vd_+(S_b)$ into $\vd_+(S)$, so it suffices to compute the portions of the contour above $\ell$. With the help of $\ell$, it is relatively easy to find a point on each contour component using the following property proved in Lemma~\ref{lem:contour_intersect_ell}: Every contour component above $\ell$ must intersect $\ell$.

At a high level, our algorithm has two main procedures. The first one is to identify the intersections between the contour and $\ell$. The second procedure is to start at these intersection points and trace each component of the contour above $\ell$.

\paragraph{The first main procedure: Finding intersections between the contour and $\ell$.}
By definition, $\ell$ is divided into segments by its intersections with $\vd_+(S_a)$, which we call {\em $\ell$-edges} of $\vd_+(S_a)$; similarly, we define {\em $\ell$-edges} for $\vd_+(S_b)$. We sweep $\ell$ from left to right, looking for places where the contour intersects $\ell$. We start with the leftmost $\ell$-edge of $\vd_+(S_a)$ and the leftmost $\ell$-edge of $\vd_+(S_b)$. At each step, we are on some $\ell$-edge $e_a$ of $\vd_+(S_a)$ and some $\ell$-edge $e_b$ of $\vd_+(S_b)$. Let $s_a\in S_a$ be the site associated with the cell of $\vd_+(S_a)$ containing $e_a$; define $s_b\in S_b$ similarly. We compute the bisector $B(s_a, s_b)$ and determine where it intersects $\ell$. The bisector is a hyperbolic arc and $\ell$ is a straight line, so they have at most two intersections $p_1$ and $p_2$. If $p_i \in e_a \cap e_b$, then $p_i$ is a point of intersection between the contour and $\ell$. In this way, we can compute all intersections between $\ell$ and the contour. Since the combinatorial sizes of $\vd_+(S_a)$ and $\vd_+(S_b)$ are $O(n)$, this procedure computes $O(n)$ intersections between $\ell$ and the contour in $O(n)$ time.

\paragraph{The second main procedure: Tracing the contour.}
We trace the contour components from the intersection points computed above. Specifically, for each intersection $p$, we trace the contour component containing $p$ as follows. Suppose that $p$ is on an $\ell$-edge $e_a$ of $\vd_+(S_a)$ and an $\ell$-edge $e_b$ of $\ell$ in $\vd_+(S_b)$. These edges are associated with sites $s_a\in S_a$ and $s_b\in S_b$. Our trace begins at $p$ and continues above $\ell$ along the bisector $B(s_a, s_b)$. This bisector enters a Voronoi region $R_a$ of $\vd_+(S_a)$ and a region $R_b$ of $\vd_+(S_b)$. We find which edge of $R_a$ or $R_b$ the bisector intersects first. If no intersection exists or the bisector first intersects $\ell$, then we finish the trace by reporting that the portion of $B(s_a, s_b)$ past $p$ is an edge of the contour. Otherwise, assume that we intersect an edge $e_a'$ of $R_a$ before an edge of $R_b$ (the case where we intersect an edge of $R_b$ first is handled the same way) and denote this point of intersection by $p'$. We rule out the case where $B(s_a, s_b)$  intersects a vertex instead of an edge because if we were to intersect a vertex, this vertex would be equidistant to three sites in $S_a$ and one site in $S_b$, which would contradict our general position assumption that no point is equidistant to four sites of $S=S_a\cup S_b$. We report that the portion of $B(s_a, s_b)$ between $p$ and $p'$ is an edge of the contour. Then, we rename $R_a$ to be the Voronoi region of $\vd_+(S_a)$ on the other side of $e_a'$ and update $p \gets p'$. We then continue the tracing from $p$ following the same process as above.

Our tracing algorithm is similar to the well-known algorithm for merging the standard Voronoi diagrams of two sets of points separated by a line~\cite{ref:ShamosCl75}. One difficulty with our algorithm is efficiently determining which edges of $R_a$ and $R_b$ the contour intersects first. This may not be a constant time operation since $R_a$ and $R_b$ may have many edges. The merge algorithm by Shamos and Hoey~\cite{ref:ShamosCl75} takes advantage of the fact that the contour in their problem is monotone so that they can find all contour edges in a region by a single scan of the boundary of that region. In our problem, the contour may not be monotone. To resolve the issue, we follow the same technique used by Kirkpatrick \cite{ref:KirkpatrickEf79} for merging standard Voronoi diagrams of two arbitrary sets of points. Specifically, before our tracing algorithm, we subdivide Voronoi regions of $\vd_+(S_a)$ and $\vd_+(S_b)$ each into sub-regions of at most four edges by drawing segments between each site and each vertex of the Voronoi region of the site (see Figure \ref{fig:spokes}; we can do this because each Voronoi region is star-shaped by Observation~\ref{obs:basic_prop}); as in \cite{ref:KirkpatrickEf79}, we refer to these segments as {\em spokes}. Because each sub-region only has at most four edges, finding where a bisector intersects a sub-region can be done in $O(1)$ time. We then apply our above tracing algorithm using these subdivisions of $\vd_+(S_a)$ and $\vd_+(S_b)$. Each tracing step now finds an intersection between the contour and either a spoke or a Voronoi edge in constant time. As such, the total time of the tracing procedure is linear in the number of such intersections. By Lemma~\ref{lem:num_contour_intersections}, the number of such intersections, and hence the runtime of the tracing procedure, is $O(n)$. Therefore, the total time of the algorithm for merging $\vd_+(S_a)$ and $\vd_+(S_b)$ is $O(n)$. This proves Theorem \ref{thm:merge}.

\begin{figure}
    \centering
    \includegraphics[width=3in]{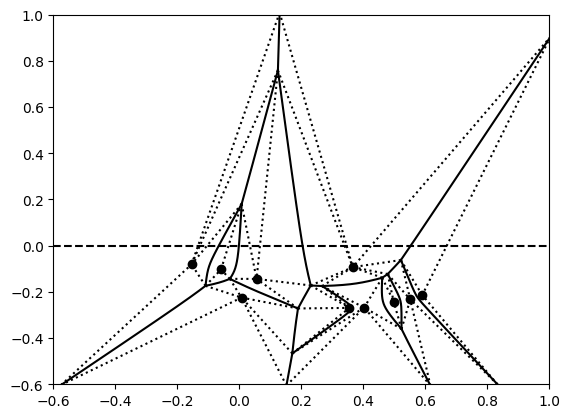}
    \caption{The dotted segments are spokes. Our algorithm only uses the portions of these spokes above $\ell$, the dashed line.}
    \label{fig:spokes}
\end{figure}

\subsection{Useful lemmas}

It remains to prove Lemmas~\ref{lem:contour_topology}, \ref{lem:contour_intersect_ell} and \ref{lem:num_contour_intersections}, which our algorithm relies on.

Recall that the contour also includes its portions below $\ell$, i.e., it is defined with respect to the complete Voronoi diagram $\vd(S)$. We first have the following lemma, which is also needed in the proof of Lemma~\ref{lem:contour_intersect_ell}; a similar result on the standard Voronoi diagrams is already used in \cite{ref:KirkpatrickEf79}.

\begin{lemma} \label{lem:contour_topology}
    Each contour component never terminates or splits; that is, it has the topology of an infinite line or a circle.
\end{lemma}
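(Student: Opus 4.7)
The plan is to show that at every Voronoi vertex of $\vd(S)$ lying on the contour, exactly two contour edges meet; combined with the fact that contour edges only terminate at Voronoi vertices or at infinity, this forces each contour component to be a connected $1$-manifold without boundary in $\R^2$, and the classification of such manifolds yields the circle/line dichotomy.

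First, I would analyze the local structure at an arbitrary Voronoi vertex $v$ of $\vd(S)$ that is incident to at least one contour edge. By the general position assumption, $v$ is equidistant to exactly three sites $s_1, s_2, s_3 \in S$, and exactly three Voronoi edges of $\vd(S)$ meet at $v$, corresponding to the three pairs $\{s_1,s_2\}$, $\{s_2,s_3\}$, $\{s_1,s_3\}$. Partitioning $\{s_1,s_2,s_3\}$ by membership in $S_a$ versus $S_b$, if all three were on the same side then none of the incident edges would be contour edges, contradicting the choice of $v$. Hence the partition has shape $2{:}1$, and then precisely two of the three incident edges join a site of $S_a$ with a site of $S_b$, while the remaining edge joins two sites on the same side. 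Thus exactly two contour edges meet at $v$, so the contour has degree exactly $2$ at every one of its Voronoi vertices, ruling out both termination (degree $1$) and splitting (degree $\geq 3$).

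Next, I would globalize this local picture. Every Voronoi edge of $\vd(S)$ is either a bounded segment with two Voronoi vertex endpoints, a ray with one Voronoi vertex endpoint and the other end at infinity, or an entire bisector curve with no Voronoi vertex on it; the same trichotomy applies to contour edges. By the previous step, every finite endpoint of a contour edge continues into exactly one other contour edge. Therefore each connected component $C$ of the contour, regarded as a subspace of $\R^2$, is a connected $1$-manifold whose only possible ``boundary'' points lie at infinity. By the classification of connected $1$-manifolds, $C$ is homeomorphic either to $S^1$ (when $C$ is compact, giving the circle topology) or to $\R$ (when $C$ has two ends escaping to infinity, giving the line topology).

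The main thing to be careful about is excluding half-infinite components, i.e., a component consisting of a single ray or one that escapes to infinity on only one side. Any such component would have to possess a finite endpoint, but by the local analysis every finite endpoint on the contour has contour-degree $2$ and therefore cannot be an end of the component. This closes off the remaining case and completes the argument.
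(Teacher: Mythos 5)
Your proof is correct and takes essentially the same approach as the paper's: both arguments hinge on the general position assumption forcing each Voronoi vertex of $\vd(S)$ to have exactly three incident edges, and then observe that a $2{:}1$ split of the three sites between $S_a$ and $S_b$ yields exactly two incident contour edges, ruling out termination and splitting. Your write-up is somewhat more formal in explicitly invoking the classification of connected $1$-manifolds and separately addressing the half-infinite-ray case, but these are elaborations of the same core idea rather than a different route.
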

\begin{proof}
    A contour component is made up of edges in $\vd(S)$, so if it were to terminate or split, it would be at a Voronoi vertex of $\vd(S)$. Due to our general position assumption that no point is equidistant to four sites, each Voronoi vertex in $\vd(S)$ is adjacent to three sites in $S$. If the contour hits a Voronoi vertex $v$, then at least one of these sites must be in $S_a$ and at least one must be in $S_b$. Without loss of generality, let these sites be $s_1$, $s_2$, and $s_3$ with $s_1, s_2 \in S_a$ and $s_3 \in S_b$. The Voronoi edge between $s_1$ and $s_3$ and the Voronoi edge between $s_2$ and $s_3$ will be on the contour, so the contour will not terminate at $v$. The edge between $s_1$ and $s_2$ will not be on the contour, so the contour will not split at $v$.
\end{proof}

\begin{lemma} \label{lem:contour_intersect_ell}
    If a contour component contains a point above $\ell$, then the contour component must intersect $\ell$.
\end{lemma}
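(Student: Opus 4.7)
The plan is to argue by contradiction: suppose some contour component $C$ contains a point above $\ell$ but $C\cap\ell=\emptyset$. Since $C$ is connected, it then lies entirely above $\ell$. By Lemma~\ref{lem:contour_topology}, $C$ is homeomorphic to either a circle or the real line, so $\R^2\setminus C$ has exactly two connected components. All sites of $S=S_a\cup S_b$ lie below $\ell$, hence in one of the two components, which I call $D$ (the unbounded component in the circle case); call the other component $D'$.

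Next I color each point of $\R^2\setminus C$ by $a$ or $b$ according to whether its nearest site in $S$ lies in $S_a$ or $S_b$. The key local claim is that, in a thin neighborhood of $C$, one side of $C$ is uniformly $a$-colored and the other is uniformly $b$-colored. To verify this I use the general position assumption: at each Voronoi vertex of $\vd(S)$ on $C$, two of the three incident sites lie in the same one of $S_a,S_b$ and the third lies in the other, so the two contour edges meeting at that vertex separate the three local Voronoi regions into two of one color on one side and one of the other color on the other side, in a color-consistent way. Without loss of generality, $D'$ is $b$-colored near $C$ and $D$ is $a$-colored near $C$.

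The main step is to derive a contradiction using the star-shape property of additively-weighted Voronoi regions (Observation~\ref{obs:basic_prop}). Pick $q\in D'$ close enough to $C$ to be $b$-colored. Its nearest site $s$ then lies in $S_b$ and, being below $\ell$, lies in $D$. Because the Voronoi region $R_s$ of $s$ in $\vd(S)$ is star-shaped with respect to $s$, the segment $\overline{sq}$ lies inside $R_s$, and hence every point of $\overline{sq}$ is $b$-colored. But $\overline{sq}$ goes from $D$ to $D'$ and so must cross $C$; just before the first crossing, it passes through a point of $D$ arbitrarily close to $C$, which must be $a$-colored — contradiction.

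The main obstacle I expect is making the local coloring claim rigorous, that is, verifying that the two sides of $C$ can be consistently labelled $a$ and $b$ along the whole of $C$, including across the Voronoi vertices through which $C$ passes; the general position hypothesis is exactly what handles those vertices. Once that is in hand, the rest is a short topological/geometric consequence of Voronoi regions being star-shaped, and the same argument applies uniformly to both the circle and line topological cases.
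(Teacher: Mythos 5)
Your proof is correct, but it takes a genuinely different (and more elaborate) route than the paper's. Both arguments begin identically: by Lemma~\ref{lem:contour_topology}, the component $C$ separates the plane into two regions, and since $C$ lies entirely above $\ell$ while every site lies below $\ell$, all sites end up in one region $D$, leaving the other region $D'$ siteless. The paper finishes immediately: $D'$ in fact lies entirely above $\ell$, yet every contour region wholly contains some Voronoi region of $\vd(S)$ (Voronoi regions are connected and disjoint from the contour) and hence contains that region's site by Observation~\ref{obs:basic_prop}, so $D'$ would contain a site above $\ell$ --- contradiction. You instead rule out a siteless region via the $a$/$b$ coloring plus star-shapedness; this works, and your vertex analysis (the $2$--$1$ split of the three sites at a contour vertex, which is where general position enters) together with local constancy along edge interiors and connectedness of $C$ does make the coloring globally consistent. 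However, the coloring --- precisely the step you flag as the main obstacle --- is unnecessary: once $q\in D'$ lies in the open region $R_s$ of a site $s\in S_b$ with $s\in D$, star-shapedness puts the segment $\overline{sq}$ inside $R_s$, which is disjoint from the contour (contour points have at least two nearest sites), yet the segment must meet $C$ because $C$ separates $D$ from $D'$; that is already the contradiction, with no need to inspect colors near the crossing. In short, your approach yields a more detailed local picture of the contour (which side is owned by $S_a$ and which by $S_b$) at the cost of extra machinery, whereas the paper's argument needs only that each Voronoi region contains its site and that one contour region lies entirely above $\ell$.
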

\begin{proof}
    Lemma \ref{lem:contour_topology} establishes that a contour component divides the plane into two regions, called {\em contour regions}. Notice that because a contour component is made up of edges in $\vd(S)$, each contour region must contain at least one Voronoi region of $\vd(S)$ and thus contains at least one site of $S$ by Observation \ref{obs:basic_prop}.

    Now assume to the contrary that a contour component $C$ contains a point above $\ell$ but $C$ does not intersect $\ell$. Then, the entire $C$ is above $\ell$. As such, one of the contour regions divided by $C$ must be entirely in the halfplane above $\ell$; let $R$ be the region. This implies that the sites of $S$ contained in $R$ must be above $\ell$, but this contradicts the fact that all sites of $S$ are below $\ell$.
\end{proof}

\begin{lemma} \label{lem:num_contour_intersections}
    \begin{enumerate}
        \item The total number of intersections between the contour and the Voronoi edges of the complete Voronoi diagrams $\vd(S_a)$ and $\vd(S_b)$ is at most $O(n)$.
        \item The total number of intersections between the contour and the spokes of the complete Voronoi diagrams $\vd(S_a)$ and $\vd(S_b)$ is at most $O(n)$.
    \end{enumerate}
\end{lemma}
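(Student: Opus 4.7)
For the first claim, my plan is to charge each contour--Voronoi-edge intersection to a vertex of the merged diagram $\vd(S)$. A point $p$ at which the contour crosses an edge of $\vd(S_a)$ is equidistant (in the weighted sense) to two sites of $S_a$ because it lies on an edge of $\vd(S_a)$, and it is also equidistant to an $S_a$-site and an $S_b$-site because it lies on the contour. The general position assumption that no point is equidistant to four sites of $S$ then forces $p$ to be equidistant to exactly three sites (two in $S_a$, one in $S_b$), so $p$ is a vertex of $\vd(S)$. Distinct intersections map to distinct vertices, and $\vd(S)$ has $O(n)$ vertices by Observation~\ref{obs:basic_prop}, so the count is $O(n)$; the same argument handles edges of $\vd(S_b)$.

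For the second claim, I plan to prove the stronger per-spoke statement that each spoke is crossed by the contour at most once, which suffices because $\vd(S_a)$ and $\vd(S_b)$ together contain only $O(n)$ spokes. Fix a Voronoi region $R_a$ of $\vd(S_a)$ with site $s_a$, and let $R_a'$ denote the Voronoi region of $s_a$ in the merged diagram $\vd(S)$; set $W_a = R_a \cap R_a'$. By Observation~\ref{obs:basic_prop}, both $R_a$ and $R_a'$ contain $s_a$ and are star-shaped with respect to $s_a$, so $W_a$ is also star-shaped with respect to $s_a$ (the intersection of two regions star-shaped about a common point is star-shaped about that point, since every segment from $s_a$ to a point of $W_a$ lies in both $R_a$ and $R_a'$). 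I then observe that the contour in the interior of $R_a$ coincides with $\partial W_a \cap \mathrm{int}(R_a)$: any such point lies on $\partial R_a'$, hence is a weighted tie between $s_a$ and some other $S$-site, and since $s_a$ is strictly closer than every other $S_a$-site in $\mathrm{int}(R_a)$, the tie must be with an $S_b$-site, which identifies the point as a contour point.

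The star-shapedness of $W_a$ then gives that every ray from $s_a$ crosses $\partial W_a$ at most once. A spoke in $R_a$ is a segment from $s_a$ to a vertex $v$ of $R_a$, lying on a single ray from $s_a$ and contained in $\overline{R_a}$, so it meets the contour in $R_a$ at most once. The symmetric argument applies to spokes of $\vd(S_b)$, and summing over all $O(n)$ spokes yields the claimed bound. The main subtle step, I expect, is recognizing that the correct star-shaped object here is $W_a$ rather than just $R_a$; once this is identified, the angular monotonicity provided by star-shapedness makes the per-spoke bound essentially immediate, and no charging argument on bisector complexity is needed.
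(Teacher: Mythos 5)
Your proposal is correct and follows essentially the same route as the paper: part 1 is the identical charging of each intersection to a vertex of $\vd(S)$, and part 2 rests on the same key fact that star-shapedness (Observation~\ref{obs:basic_prop}) forbids a second, closer contour crossing on any spoke. Your packaging via the star-shaped set $W_a = R_a \cap R_a'$ is just a mild rephrasing of the paper's argument that the open segment from the site to a spoke--contour intersection lies inside the site's region in $\vd(S)$.
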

\begin{proof}
    We adapt the proof from \cite{ref:KirkpatrickEf79} for a similar lemma on standard Voronoi diagrams.

    Notice that the intersection between the contour and a Voronoi edge in $\vd(S_a)$ or $\vd(S_b)$ is a vertex in $\vd(S)$. There are $O(n)$ vertices in $\vd(S)$, so the total number of intersections between the contour and the Voronoi edges of $\vd(S_a)$ and $\vd(S_b)$ is at most $O(n)$. This proves the first lemma statement.

    To prove the second lemma statement, we show that the contour can intersect each spoke at most once. We exploit the fact that Voronoi regions are star-shaped (Observation \ref{obs:basic_prop}). If the contour intersects a spoke of the Voronoi region for site $s$ in $\vd(S_a)$ or $\vd(S_b)$, then the open segment between $s$ and this intersection will lie in the Voronoi region of $s$ in $\vd(S)$. Because this segment is in the Voronoi region for $s$ in $\vd(S)$, the contour cannot intersect this segment.

    Now, assume for the sake of contradiction that the contour were to intersect a spoke twice. This would mean the closer to $s$ of the two intersections would lie on the segment between $s$ and the further of the two intersections, which we have shown above to be impossible. Therefore, the contour can only intersect each spoke at most once, and there are $O(n)$ spokes in $\vd(S_a)$ and $\vd(S_b)$, so the total number of intersections between the contour and the spokes is at most $O(n)$.
\end{proof}

\section{Algebraic decision tree algorithm}
\label{sec:decisiontree}

Under the algebraic decision tree model, where the time complexity is measured only by the number of comparisons, we show that the IOAWNN-SL problem can be solved using $O(n\log n)$ comparisons. Consequently, we can solve the shortest path problem in weighted unit-disk graphs in $O(n\log n)$ time under the algebraic decision tree model. In the following, we first describe an $O(n\log^2 n)$ time algorithm under the conventional computational model and then show how to improve it to $O(n\log n)$ time under the algebraic decision tree model.

Let $p_1,p_2,\ldots,p_n$ be the points to be inserted in this order; each point has a weight. Let $P$ denote the set of all these points. Let $Q$ be a set of $O(n)$ query points, such that all points of $P$ are above the $x$-axis $\ell$ while all points of $Q$ are below $\ell$. For each query point $q\in Q$, we know the timer when the query is conducted, i.e., we know the index $i$ such that the query looks for the nearest neighbor of $q$ among the first $i$ points of $P$. Our goal is to answer all queries for the points of $Q$.

We construct a complete binary tree $T$ whose leaves from left to right correspond to points $p_1,p_2,\ldots,p_n$ in this order. For each node $v\in T$, let $P_v$ denote the set of points that are in the leaves of the subtree rooted at $v$. Let $\vd(P_v)$ be the additively-weighted Voronoi diagram for the weighted points of $P_v$; let $\vd_+(P_v)$ be the portion of $\vd(P_v)$ above $\ell$. We construct $\vd_+(P_v)$. If we construct $\vd_+(P_v)$ for all nodes $v$ of $T$ in a bottom-up manner and use our linear time merge algorithm in Theorem~\ref{thm:merge}, constructing the diagrams $\vd_+(P_v)$ for all nodes $v\in T$ can be done in $O(n\log n)$ time. In addition, we construct Kirkpatrick's point location data structure~\cite{ref:KirkpatrickOp83} on $\vd_+(P_v)$ for each node $v\in T$, which takes $O(|P_v|)$ time. \footnote{Note that Kirkpatrick's data structure is originally for planar subdivisions in which each edge is a straight line segment. However, as discussed in~\cite{ref:KirkpatrickOp83}, the algorithm also works for additively weighted Voronoi diagrams (and other types of Voronoi diagrams) since each cell of the diagram is star-shaped. A subtle issue in our problem is that $\vd_+(P_v)$ is only the portion of the complete diagram $\vd(P_v)$ above $\ell$, and each cell of $\vd_+(P_v)$ does not contain its site. To circumvent the issue, we can enlarge each cell of $\vd_+(P_v)$ by including its site, as follows. For each cell $R\in \vd_+(P_v)$, if $\overline{ab}$ is a maximal segment of $R\cap \ell$, then we add the triangle $\triangle pab$ to $R$, where $p$ is the site of $R$. Note that $\triangle pab$ must be inside the cell of $p$ in $\vd(P_v)$, denoted by $R'$. As such, the enlarged region $R$ is still star-shaped, contains its site $p$, and is a subset of $R'$. We can then construct Kirkpatrick's point location data structure on the subdivision of all these enlarged regions $R$.} Note that we use Kirkpatrick's point location data structure instead of others such as the one in~\cite{ref:EdelsbrunnerOp86} because we will need to apply a technique from \cite{ref:ChanHo23} that requires Kirkpatrick's data structure. Constructing the point location data structures for all nodes of $T$ takes $O(n\log n)$ time.

Consider a query point $q\in Q$. Suppose we are looking for the nearest neighbor of $q$ among the first $i$ points $p_1,p_2,\ldots,p_i$ of $P$. Let $v_i$ be the leaf of $T$ corresponding to $p_i$. Following the path in $T$ from the root to $v_i$, we can find a set $V_q$ of nodes of $T$ such that the union of $P_v$ for all $v\in V_q$ is exactly $\{p_1,p_2,\ldots,p_i\}$. As such, the query can be answered after performing $O(\log n)$ point location queries on $\vd_+(P_v)$ for all $v\in V_q$. As each point location query takes $O(\log n)$ time, answering the nearest neighbor query for $q$ can be done in $O(\log^2 n)$ time. Therefore, the total time for answering the queries for all points of $Q$ is $O(n\log^2 n)$.

The above solves the problem in $O(n\log^2 n)$ time. To improve the time to $O(n\log n)$, the bottleneck is to solve all $O(n\log n)$ point location queries. For this, we resort to a technique recently developed by Chan and Zheng~\cite{ref:ChanHo23} under the algebraic decision tree model. We can simply apply \cite[Theorem~7.2]{ref:ChanHo23} to solve all our point location queries using $O(n\log n)$ comparisons (specifically, following the notation in \cite[Theorem~7.2]{ref:ChanHo23}, we have $t=O(n)$, $L=O(n\log n)$, $M=O(n\log n)$, and $N=O(n)$ in our problem; according to the theorem, all point location queries can be solved using $O(L+M+N\log N)$ comparisons, which is $O(n\log n)$). Note that the theorem statement requires the input planar subdivisions to be triangulated. The triangulation is mainly used to construct Kirkpatrick's point location data structure~\cite{ref:KirkpatrickOp83} on each planar subdivision. Since we already have Kirkpatrick's point location data structure for each $\vd_+(P_v)$ as discussed above, we can simply follow the same algorithm of the theorem.

\end{document}